\newcounter{llst}
\newenvironment{numm}{\begin{list}{\rm (\roman{llst})}{\usecounter{llst}
\setlength{\itemindent}{0em} \setlength{\leftmargin}{3.5em}
\setlength{\labelwidth}{2.5em} \setlength{\labelsep}{1em}}}{\end{list}}
\newtheorem{theorem}{Theorem}[section]
\newtheorem{axiom}[theorem]{Axiom}
\newtheorem{corollary}[theorem]{Corollary}
\newtheorem{definition}[theorem]{Definition}
\newtheorem{expl}[theorem]{Example}
\newtheorem{rmrk}[theorem]{Remark}
\newenvironment{proof}[1][Proof]{\noindent \textbf{#1.} }{\hfill
\rule{0.5em}{0.5em}}
{}
\newenvironment{remark}{\begin{rmrk} \rm}{\hfill $\blacklozenge$
\end{rmrk}}{}
\begin{document}

\title{\textbf{Platform Competition as \\ Network Contestability}\thanks{We thank Matthias Beck and Owen Sims for valuable discussions on the subject of this paper and Lydia Pik Yi Cheung for literature suggestions. The comments of seminar participants at Temple University and the University of Birmingham have also been useful. We also thank Marianne Miserandino for her helpful suggestions on the presentation.}}

\author{Robert P.~Gilles\thanks{Queen's University Management School, Riddel Hall, 185 Stranmillis Road, Belfast BT9\ 5EE, UK. Email: \textsf{r.gilles@qub.ac.uk}} \and Dimitrios Diamantaras\thanks{Department of Economics, Temple University, Philadelphia, PA. Email: \textsf{dimitris@temple.edu}} }

\date{October 2013}

\maketitle

\begin{abstract}
\singlespace\noindent
Recent research in industrial organisation has investigated the essential place that middlemen have in the networks that make up our global economy. In this paper we attempt to understand how such middlemen compete with each other through a game theoretic analysis using novel techniques from decision-making under ambiguity.

We model a purposely abstract and reduced model of one middleman who provides a two-sided platform, mediating surplus-creating interactions between two users. The middleman evaluates uncertain outcomes under positional ambiguity, taking into account the possibility of the emergence of an alternative middleman offering intermediary services to the two users.

Surprisingly, we find many situations in which the middleman will purposely extract maximal gains from her position. Only if there is relatively low probability of devastating loss of business under competition, the middleman will adopt a more competitive attitude and extract less from her position.

\medskip\noindent
\textbf{Keywords:} competition, middlemen, ambiguity, platform, two-sided market, market intermediation.
\end{abstract}

\newpage

\section{Platform provision and contestation}

\noindent
Middlemen arise in many economic and social situations. Usually the emergence of middlemen in economic situations relates to the nature of supply chains in our networked, global economy. An individual who occupies a middleman position in a trade network can interrupt trade as well as information flows and, therefore, has power due to her position in this trade network. This power implies that a middleman can command a larger share of the gains from trade under her control. By withdrawing from a trade chain and disabling the platform she provides, the middleman can disrupt the interaction between the users under consideration, which gives the middleman high bargaining power in comparison with these users.

The position a middleman occupies in the network explains her ability to levy high access and trading fees in the market. This explains the high profits of NYSE and NASDAQ as well as the success of e-Bay and Amazon.com. It also explains why Google and Facebook are viewed as potentially extremely profitable; they control vast amounts of information about individuals who participate in their online environments.

Recent research in systems science, computer science and physics has investigated the structure of socio-economic networks that make up the global economy. Evidence for the importance of middleman positions in these networks has been found and used to introduce theories that try to understand the functionality of these networks from this perspective \citep{Newman2006book}.\footnote{We refer to \citet{Newman2010book} for an accessible overview of these theories and perspectives.} For example, \citet{Vitali2011} and \citet{Vitali2013} investigated the corporate ownership network in the global economy and identified the central position of middlemen in a so-called strongly connected core. They found that financial services multinationals occupy positions in this strongly connected core.  Clearly, this refers to the control that these banks as middlemen exert in these ownership networks.

\subsection{Middlemen as platform providers}

\noindent
The main reason that middlemen emerge in the socio-economic networks surrounding us, is that they facilitate mutually beneficial interactions that make up our global economy. Although the nature of their privileged positions is well understood, it is much less clear how middlemen compete or are contested. In this paper, we propose that a middleman's perception of potential contestation of his privileged position filters the way competition limits the power of the middleman. We base our argument on the principle that contestability is all in the mind of the middleman and this may make the middleman slow to react to emerging competition for the middleman position. We make essential use of tools from the theory of decision making under ambiguity.

Our point of departure is that middlemen can be fully characterised as providers of \emph{interaction platforms}. As such our approach complements the theory of trade platforms seminally introduced in \citet{Tirole2003}, developed further for two-sided platforms in \citet{EvansSchmalensee2007}, \citet{Rysman2009} and \citet{Damme2012} and for multi-sided platforms in \citet{HagiuWright2011} and \citet{EvansSchmalensee2013}. We restrict our discussion to consider a simplified model of a middleman as a provider of a two-sided interaction platform and, as such, a middleman is viewed as a facilitator of value-generating interaction between users of that platform. Our analysis of this intermediated interaction situation will be rather different from that employed in the quoted literature on two-sided platforms.

Two-sided platforms are usually considered to be market organisations that facilitate the confrontation of market demand and supply. Platform providers are assumed to charge usage fees for the market auctioneering services provided. Our approach fits best with the definition of a platform developed in \citet[page 2]{HagiuWright2011}, who view a platform as ``an organisation that creates value primarily by enabling direct interactions between two (or more) distinct types of affiliated customers''.

We consider the simple case of a single middleman $M$, who provides a platform for the mutually beneficial interaction of two users, denoted as $1$ and $2$. The network representation is depicted in Figure~\ref{fig:1}.

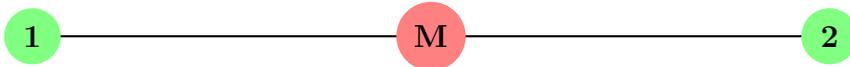
\begin{figure}[h]
\begin{center}
\begin{tikzpicture}[scale=0.35]
\draw[thick] (0,0) -- (15,0) -- (30,0);
\draw (0,0) node[circle,fill=green!50] {$\mathbf{1}$};
\draw (15,0) node[circle,fill=red!50] {$\mathbf{M}$};
\draw (30,0) node[circle,fill=green!50] {$\mathbf{2}$};
\end{tikzpicture}
\end{center}
\caption{Intermediated interaction between users $1$ and $2$.} \label{fig:1}
\end{figure}

We explicitly assume that the two users cannot interact directly, but only through the intermediation of the middleman.\footnote{An alternative interpretation would be that these users face insurmountable \emph{transaction costs} to overcome obstacles to their direct interaction.} This implies that the middleman has power of control by virtue of her network position and can exploit that control to her benefit. Such a middleman can be interpreted as a market-maker, although the platform does not necessarily have to be a marketplace. For instance, the landlord of a public house---or ``pub''---provides a platform for social interaction, which might or might not be economic in nature. The pub facilitates the creation of gains from interaction and, therefore, the landlord can expect a share in these gains through the sale of products in the pub.\footnote{This example of mainly social interaction clarifies that middleman intermediation usually involves investment to provide a platform. See \citet{GillesDiamantaras2003} for a formal model of the costly provision of such market-making in a non-network model.}

In the abstract social network depicted in Figure 1, we can show in a simple non-cooperative game that the middleman $M$ normally would exert fully her positional power and, consequently, would extract the full monopoly profits from the interaction---such as the only pub in a small town on the King's Road would do. The only limitation to this exertion of control is that the two users might have access to an alternative interaction platform, provided by another middleman. This form of contestation represents the most basic form of competition between multiple middlemen in a network.

One would expect that the access of users to multiple middlemen would be sufficient to guarantee a form of Bertrand-like price competition between intermediating agents, resulting in a reduction of access fees to marginal cost levels. However, observations from platform competition in our global economy indicate that there remain relatively high access fees in place. This was pointed out for payment card services by \citet{Tirole2003} and \citet{Rysman2009}. Rysman remarks that access pricing is affected by how users move between different platforms. In real world platforms, users move all \emph{or\/} some of their usage from one platform to the other. \citep[page 130]{Rysman2009}

\medskip\noindent
We develop a simple game theoretic model that uses some of these features. Our point of departure is that there are essentially two very different viewpoints to middleman competition with differing implications. First, from a static point of view, competition could refer to the absence of any middleman power in the interaction network. This implies that all interaction should either be two-sided---like a matching market---or based on chains with an abundance of multiple platform choices to all users. Second, from a more dynamic point of view, competition could refer to the principle that every middleman position should be challenged if it occurs. The second principle---akin to the classical notion of ``contestability'' \citep{Baumol1982}---is much harder to embed in a mathematical theory.

Our aim in this paper is to devise a game theoretic model that can encompass some basic features of the dynamic approach to platform competition. For this, imagine that a middleman is able to extract excess rents from the provision of a platform within the prevailing interaction network. However, if this position can be challenged, this is sufficient to force the middleman to reduce her transfer fees to a minimum. Thus, the middleman anticipates the emergence of a competing platform, which is prevented by the middleman through the reduction of the charged access fees to her platform. We denote this notion as that the middleman's position is \emph{contestable} in the prevailing network.

For example, Amazon.com is the main book seller on the Internet. So, in principle Amazon.com can be viewed as a middleman providing a platform in the global book market. However, Amazon's position is contestable, since if it charges higher prices, its potential buyers will resort to purchasing their books at vendors such as local bookstores or other online book sellers. So, even though many of its clients might be purchasing books exclusively from Amazon.com, a price increase might allow these economic agents to change their purchasing decisions and seek trade relations with alternative sellers even if they have no previous experience with these alternative sellers. It would also motivate others to start up new online bookstores.

The Amazon.com example refers to a change of behaviour and the activation of alternative, potential trade relations to circumvent a provided platform in the prevailing trade network. This behaviour is at the foundation of a dynamic notion of platform competition. This is depicted in Figure~\ref{fig:2}: The dashed relationships between the two users $1$ and $2$ with the alternative middleman $C$ need to be activated in order to undermine $M$'s middleman position.

\subsection{Our modelling approach}

\noindent
We develop a game theoretic approach to the modelling of middleman contestability in intermediated interaction situations. as stated, we limit our model to that of one middleman and two platform users. We describe the gains from interaction through a hedonic model. The two users are assumed to generate a certain value from interaction, which can be appropriated by the middleman through the imposition of a access fee to use the interaction platform the middleman provides. We allow extraction up to all of the gains the interaction generates and assume price discrimination. As expected, the standard Nash equilibrium in this model without alternative interaction platforms offered to the users, the middleman will extract all generated gains. This Nash equilibrium describes the benchmark case in our model and refers to the monopolistic case with a single middleman or interaction platform.

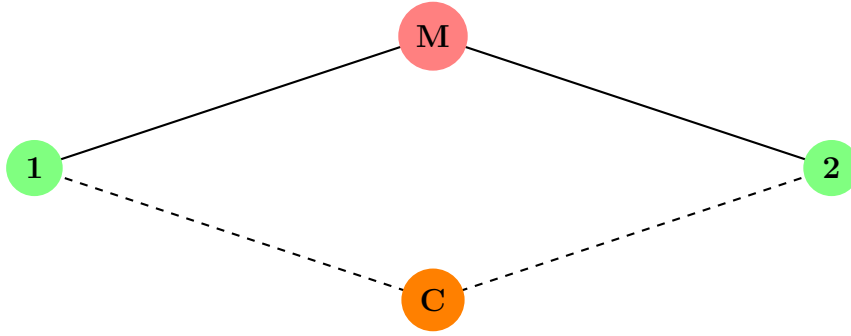
\begin{figure}[h]
\begin{center}
\begin{tikzpicture}[scale=0.35]
\draw[thick] (0,0) -- (15,5) -- (30,0);
\draw[thick,dashed] (0,0) -- (15,-5) -- (30,0);
\draw (0,0) node[circle,fill=green!50] {$\mathbf{1}$};
\draw (15,5) node[circle,fill=red!50] {$\mathbf{M}$};
\draw (15,-5) node[circle,fill=orange] {$\mathbf{C}$};
\draw (30,0) node[circle,fill=green!50] {$\mathbf{2}$};
\end{tikzpicture}
\end{center}
\caption{Contested interaction between users $1$ and $2$.} \label{fig:2}
\end{figure}

Next, we introduce a potential alternative middleman into the model, as depicted in Figure~\ref{fig:2}. We use the ambiguity approach introduced by \citet{EichbergerKelsey2002} to describe the belief system of a contested middleman. Hence, we assume that the middleman formulates optimistic and pessimistic beliefs regarding her position in the interaction network. The optimistic beliefs represent an uncontested middleman position with full monopolistic extraction depicted in Figure~\ref{fig:1}, while her pessimistic beliefs concern a contested middleman position depicted in Figure~\ref{fig:2}.

In this model, middleman contestation is introduced as a degree of pessimism---expression the expectation of the middleman to be in a contested position---as well as the expected level of interaction under contestation. The latter is understood as a level of \emph{customer loyalty} to the platform provided by the middleman. Our approach founded on this form of ambiguity as a form of boundedly rational decision making represents the idea that middleman contestability is ``all in the mind of the middleman''.

Our {\em main theorem\/} concludes that if both the degree of pessimism and the expected level of loyalty under contestation are sufficiently low, the middleman continues to maximally extract gains from her position. This corresponds to the case that the middleman takes full advantage of loyal customers, thereby lowering the quality of the service provided as well as increasing the charged fees. As a real-world application, we propose that the contemporary pricing policies for banking and mobile telephone services are optimal under boundedly rational behaviour based on positional ambiguity.

Only when the degree of pessimism and the level of loyalty are sufficiently high, the middleman will act in a competitive fashion and conform to setting low extraction rates. Hence, only when platform competition is the norm, competitive behaviour emerges. This is usually the case in, for example, the retail industry.

We explore these conclusions further in an activity level formulation of our model, which allows us the make inferences about hybrid cases of a high degree of pessimism under low levels of loyalty as well as a low degree of pessimism under high levels of loyalty.

\subsection{Relationship to the literature}

\noindent
In economics, the importance of middlemen in intermediated interaction situations has already been recognised in \citet{KalaiMiddlemen1978}, investigating payoffs to middlemen in core allocations in an intermediated trade situation. \citet{RubinsteinWolinsky1987} subsequently extended this analysis in a model of market search. \citet{Biglaiser1993} and \citet{BiglaiserFriedman1994} seminally introduced middlemen in an industrial organisations approach. 

\citet{JacksonWolinsky1996} seminally analysed middlemen within the context of a network model of economic interaction, which was extended by \citet{GillesChakrabarti2006}. This literature showed that middleman positions are critical in networked intermediation and that such middlemen can extract significant gains from their positions within a cooperative game theoretic framework. Recently \citet{Masters:2007tu}, \citet{Masters:2008ud}, \citet{Watanabe:2010uv} and \citet{Watanabe:2011} expanded this approach to develop a network approach to search and matching in markets.

In the literature on two-sided markets, the notion of a \emph{platform} was seminally introduced in \citet{Tirole2003}. The developments in this strand of literature focussed mainly on the pricing of intermediation services and anti-trust policy analysis. For a survey for two-sided platforms we refer to \citet{Tirole2006}, \citet{EvansSchmalensee2007}, \citet{Rysman2009} and \citet{Damme2012} and for multi-sided platforms to \citet{HagiuWright2011} and \citet{EvansSchmalensee2013}.

The platform literature differs in its approach fundamentally from our investigation in this paper. We are neither concerned with platform pricing nor with anti-trust implications of platform provision. Instead, we focus on the monopolistic extraction of rents from the intermediating position of a platform provider, in network terminology known as a ``middleman''. In our model, the pricing decision is deliberately simplistic and the analysis is purely structural: The middleman acts in response to perceived positional features of the network.

\section{Preliminaries}

\noindent
In this section we develop some necessary tools for building a game theoretic model of middleman contestability for general intermediated interaction.

\paragraph{Strategic form games}

Let $N = \{ 1, \ldots ,n \}$ be interpreted as a finite set of \emph{players\/} representing rational decision makers. Each player $i \in N$ is assigned a \emph{strategy set\/} $S_i$ with typical strategy $x_i \in S_i$. An ordered list $(x_1, \ldots ,x_n) \in S_1 \times \cdots \times S_n \equiv S$ is called a \emph{strategy profile,\/} thus introducing $S$ as the set of strategy profiles. Finally, for each player $i \in N$ we introduce a \emph{payoff function\/} $\pi_{i} \colon S \to \mathbb{R}$. A \emph{game} can now be represented as a pair $(S, \pi)$, where $\pi = (\pi_{1}, \ldots , \pi_{n}) \colon S \to \mathbb{R}^N$ is the payoff function. 

A strategy profile $x^{\ast} = (x^{\ast}_1, \ldots ,x^{\ast}_n) \in S$ is a \emph{Nash equilibrium} of the game $(S, \pi )$ if it holds that for every player $i \in N$ and every strategy $y_i \in S_i \colon$
\begin{equation} \label{eq:NE}
\pi_i (x^{\ast} ) \geqslant \pi_i (x^{\ast}_1 , \ldots ,x^{\ast}_{i-1} , y_i , x^{\ast}_{i+1}, \ldots ,x^{\ast}_n ) .
\end{equation}

\paragraph{Introducing ambiguity into strategic games}

Considering decision-making processes subject to ambiguity regarding a player's game theoretic position, it is natural to take into account how each of the players perceives the actions of each other. This refers to the ambiguity in each player's mind about how other players might come to decisions. This form of positional ambiguity can be captured by an appropriately constructed equilibrium notion, an \emph{ambiguity equilibrium\/} concept based on so-called ``neo-additive belief systems''.

Based on the seminal contributions in \citet{EichbergerKelsey2000}, \citet{EichbergerKelseySchipper2007} show that under neo-additive beliefs a pure strategy ambiguity equilibrium in a two-player game corresponds to a pure strategy Nash equilibrium in a transformed game with a three-part payoff function.\footnote{We point out that \citet{ChateauneufEichbergerGrant2007} provide an axiomatic foundation for the equilibrium concept that underlies the ambiguity equilibrium that was introduced in these papers.} This insight incorporates the consistency of the mutual (ambiguous) strategic expectations of both players about each other's strategy at equilibrium. We use this formulation here to describe the behaviour of the middleman in the intermediated interaction situation.

Formally, consider any non-cooperative game $(N,S, \pi )$. For player $i \in N$ in this game, ambiguity is represented by the quadruplet $(M_{i} , \lambda_{i} ; m_{i} , \gamma_{i} )$ consisting of the following elements.
\begin{description} 
\item[Optimistic beliefs.] Each player $i$ formulates well-defined optimistic expectations with regard to her payoffs in the game. These expectations describe the best that can occur in the game for this player. In case of positional considerations, this refers to the case that the player has a ``high extraction'' position that allows for maximal extraction of gains from the social interaction situation.
\\
The \emph{optimistic payoff function\/} of player $i$ is a function $M_{i} \colon S_{i} \to \mathbb{R}$ assigning to every strategy $x_{i} \in S_{i}$ of player $i$ a high extraction payoff $M_{i} (x_{i}) \in \mathbb{R}$.
\\
The number $\lambda_{i} \in [0,1]$ represents the weight that player $i$ puts on her beliefs that high extraction will occur. In other words, $\Lambda_i$ represents the \emph{degree of optimism\/} of player $i$. If $\lambda_{i}=0$, player $i$ has no expectation that she will receive maximal payoffs in the game, while $\lambda_{i}=1$ refers to the other extreme case that player $i$ is fully convinced that she will only receive maximal payoffs.

\item[Pessimistic beliefs.] Similarly, each player $i$ formulates pessimistic expectations with regard to her payoffs in the game in a dual fashion to her optimistic beliefs. These expectations describe the worst case that this player imagines facing. In case of positional considerations, this refers to the case that the player has a ``minimal extraction'' position that allows for only minimal payoffs from social interaction.
\\
The \emph{pessimistic payoff function\/} of player $i$ is represented by a function $m_{i} \colon S_{i} \to \mathbb{R}$ assigning to every strategy $x_{i} \in S_{i}$ of player $i$ her minimally expected payoff $m_i (x_{i}) \in \mathbb{R}$.
\\
The number $\gamma_{i} \in [0,1]$ represents the weight that player $i$ puts on her pessimistic beliefs, i.e., player $i$'s \emph{degree of pessimism}. If $\gamma_{i}=0$, player $i$ assigns no weight to the possibility that she will receive minimal payoffs in the game, while $\gamma_{i}=1$ refers to the other extreme case that player $i$ is fully convinced that she will only receive minimal payoffs. 
\end{description} 
Throughout we assume that the belief-system $(\lambda_{i} , M_{i} ; \gamma_{i} , m_{i})_{i \in N}$ is \emph{proper\/} in the sense that for every player $i$ it holds that $\lambda_{i} + \gamma_{i} \leqslant 1$, where the sum of the degrees $\lambda_{i} + \gamma_{i}$ is interpreted as the \emph{degree of ambiguity\/} of player $i$. The remainder $1 - \lambda_i - \gamma_i \geqslant 0$ represents the player's degree of belief that she is in a ``regular'' state in this strategic interaction situation.
\begin{definition} 
A strategy profile $x^\star \in S$ is an \textbf{ambiguity equilibrium} in the  game $(S, \pi )$ for the proper belief system $(\lambda_{i} , M_{i} ;  \gamma_{i} , m_{i})_{i \in N}$ if $x^\star$ is a Nash equilibrium in the modified game $(S, \overline{\pi})$, where $\overline{\pi}_{i} \colon S \to \mathbb{R}$ for each player $i$ is a modified payoff function given by
\begin{equation}\label{eq:Choquet-def} 
\overline{\pi}_i (x_i, x_{-i}) = \lambda_{i}M_{i}(x_{i}) + \gamma_{i}m_{i}(x_{i}) + (1-\gamma_{i}-\lambda_{i}) \pi_{i}(x_{i},x_{-i}). 
\end{equation} 
\end{definition} 
If the degree of ambiguity is zero, the modified payoff formulation (\ref{eq:Choquet-def}) reduces to the standard  payoff function.

\section{A hedonic model of intermediated interaction}

\noindent
In this section we develop a general hedonic model of an intermediated interaction situation. We assume there are two users and a single middleman who provides a platform to facilitate the interaction between these two users as depicted in Figure 1. Both users select a certain interaction level in the provided platform. The middleman in turn sets a participation fee for each of these users. This participation fee is set individually, thereby allowing complete price discrimination.

We first introduce a standard game theoretic model in which in equilibrium the middleman will extract maximally from her position as a platform provider. Subsequently, we introduce ambiguity and derive that a mixed situation emerges with full monopolistic extraction or fully competitive behaviour under conditions describing contestation, depicted in Figure 2.

\subsection{A benchmark hedonic approach to platform pricing}

\noindent
We consider a three-player non-cooperative game to describe intermediated interaction. There are two \emph{users,\/} denoted by $1$ and $2$, who aim to interact to generate mutual gains. This interaction is intermediated and both users set a participation level in the interaction platform provided by a single middleman. The generated gains depend on the selected participation levels. Formally, this is described by the strategy sets $S_1=S_2=[0,1]$, with generic strategic variables $s_1$ and $s_2$.\footnote{The assumption that the users' strategies are continuous variables allows us to interpret the users as standing in for two continua of agents that use the interaction platform.}

The middleman $M$ provides an interaction platform on which the two users can generate mutual gains from interactions. The middleman $M$ sets an \emph{access fee\/} for each user, represented by $\rho_i \in [0,1]$, $i\in\{1,2\}$, i.e., $M$'s strategy set is $S_M = [0,1]^2$ with typical element $\rho \equiv (\rho_1 , \rho_2)$.

Regarding the two users, let the function $f_i \colon S_1 \times S_2 \to \mathbb{R}_{+}$ describe the gross benefits generated for user $i \in \{ 1,2 \}$, determined by the selected participation levels $(s_1 , s_2 )$ in the provided interaction platform. Hence, $f_i (s_1,s_2) \geqslant 0$ describes the level of the gains from interaction generated for user $i$ when the users put participation levels $s_1$ and $s_2$ into the provided interaction platform.

We impose the following condition on the benefit functions $f_1$ and $f_2$.
\begin{axiom}\label{axiom-on-f}
For each user $i \in \{ 1,2 \}$, the benefit function $f_i$ is strictly increasing in each argument. 
\end{axiom}
This monotonicity condition represents that gains from interaction are optimally generated at maximal exertion of effort or maximal investment. In other words, we assume that marginal benefits are not exhausted until the assumed maximal level of participation $s_1=s_2=1$ is reached.
\begin{remark}
A special case of our construction is the symmetric situation, where $f_1=f_2=f$ for some strictly increasing benefit function $f \colon [0,1]^{2} \to \mathbb{R}_{+}$. Clearly, Axiom \ref{axiom-on-f} is satisfied in this case.
\\
A canonical example of such a benefit function $f$ is described by the familiar generic Cobb-Douglas formulation $f (s_{1},s_{2}) = s_{1}^{\alpha}\cdot s_{2}^{\beta}$, where $\alpha , \, \beta  >0$. In particular, the case of $\alpha = \beta =1$ generates a symmetric, multiplicative model of the interaction between the two users. We analyse this example later.
\end{remark}
We complete the description of the payoff function of each user by assuming that the participation fee is simply subtracted from the generated benefits for each user. Hence, each user $i \in \{ 1,2 \}$ pays the participation fee $\rho_i \geqslant 0$ straightforwardly from the generated benefit described by $f_i$. For every strategy vector $(s_1,s_2,\rho ) \in S_1 \times S_2 \times S_M$ this introduces the game theoretic payoff function $\pi_i \colon S_M \times S_1 \times S_2 \to \mathbb{R}_+$ of user $i \in \{ 1,2 \}$ as
\begin{equation}
\label{ }
\pi_i (\rho,s_1,s_2) =
\left\{ \begin{array}{ll}
f_i (s_1,s_2) - \rho_i & \mbox{if } \rho_i \leqslant f_i (s_1, s_2 ), \\
0 & \mbox{otherwise}.
\end{array}
\right.
\end{equation}
Next we turn to the development of the game theoretic payoff function of the middleman $M$. We introduce a function $\pi \colon S_M \times [0,1]^2 \to \mathbb{R}_+$ that describes the net income generated for the middleman $M$ from providing an interaction platform at certain set participation fees. We hypothesise that this net income is positive and a function of the participation fees $\rho_1$ and $\rho_2$ as well as the two participation levels $s_1$ and $s_2$.

We make the following regularity assumption.
\begin{axiom}\label{axiom-on-pi}
The net income function $\pi$ is weakly increasing in each argument. 
\end{axiom}
Note that the two users $1$ and $2$ are required both to participate to generate any income for the middleman. Hence, the participation fees as set by the middleman $M$ should not exceed the generated benefits for each of the two users. Therefore, the game-theoretic payoff function for the middleman $M$ is formulated as
\begin{equation}
\pi_M (\rho,s_1,s_2) = 
\left\{ \begin{array}{ll}
\pi (\rho, s_1 , s_2) & \mbox{if }  \rho_1 \leqslant f_1 (s_1,s_2) \mbox{ and } \rho_2 \leqslant f_2 (s_1,s_2)  , \\
0 & \mbox{otherwise}.
\end{array}
\right.
\end{equation}
This completes the setup of the game theoretic framework that describes a hedonic model of intermediated interaction between two users. Next we turn to the analysis of this non-cooperative game in strategic form.
\begin{theorem} \label{thm:max usage}
The following statements hold for our model.
\begin{numm}
\item Under Axiom \ref{axiom-on-f}, $\hat{s}_1= \hat{s}_2 =1$ are weakly dominant strategies for users $1$ and $2$, respectively.

\item Under Axioms \ref{axiom-on-f} and \ref{axiom-on-pi}, the strategy tuple of maximal usage and full extraction described by $\hat{s}_1= \hat{s}_2  =1$ and $(\hat{\rho}_1,\hat{\rho}_2) = (f_1(1,1),f_2(1,1))$ is a Nash equilibrium in the game describing the hedonic intermediated interaction situation.

\item Under Axioms \ref{axiom-on-f} and \ref{axiom-on-pi}, the strategy tuple of maximal usage and full extraction described by $\hat{s}_1= \hat{s}_2  =1$ and $(\hat{\rho}_1,\hat{\rho}_2) = (f_1(1,1),f_2(1,1))$ is Pareto efficient.
\end{numm}
\end{theorem}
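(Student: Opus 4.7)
The plan is to prove the three items in order, drawing on strict monotonicity of the users' benefit functions (Axiom~\ref{axiom-on-f}) throughout and on weak monotonicity of the middleman's net income $\pi$ (Axiom~\ref{axiom-on-pi}) for the middleman side. The whole argument is essentially a careful case analysis around the participation-constraint kinks in the payoff functions.

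For part (i), I would fix an opponent level $s_{-i}$ and a fee vector $\rho$ and compare user $i$'s payoff at $s_i=1$ with that at an arbitrary $s_i\in[0,1)$. Two cases: if $\rho_i>f_i(1,s_{-i})$, then Axiom~\ref{axiom-on-f} gives $\rho_i>f_i(s_i,s_{-i})$ for every $s_i\in[0,1]$, so both payoffs equal zero; if $\rho_i\leqslant f_i(1,s_{-i})$, then $s_i=1$ yields $f_i(1,s_{-i})-\rho_i\geqslant 0$, while any other $s_i$ yields either zero (constraint fails) or $f_i(s_i,s_{-i})-\rho_i$, which Axiom~\ref{axiom-on-f} bounds above by $f_i(1,s_{-i})-\rho_i$. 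Either way $s_i=1$ is weakly best, so it weakly dominates.

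For part (ii), item (i) already gives $\hat s_1=\hat s_2=1$ as a best response to every fee vector, including $\hat\rho$. What remains is that the middleman has no profitable deviation from $\hat\rho$ given users play $(1,1)$. Raising any $\rho_i$ above $\hat\rho_i=f_i(1,1)$ breaches feasibility, so her payoff collapses to $0$, strictly worse than the positive baseline $\pi(\hat\rho,1,1)$; lowering any $\rho_i$ preserves feasibility but weakly reduces $\pi$ by Axiom~\ref{axiom-on-pi}. Neither direction strictly improves, and together with part (i) this yields the Nash claim.

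For part (iii), suppose for contradiction that $(\rho',s_1',s_2')$ Pareto dominates the proposed tuple. Since $s_i'\leqslant 1$ and $f_i$ is strictly increasing, $f_i(s_1',s_2')\leqslant f_i(1,1)=\hat\rho_i$; hence any $\rho_i'>\hat\rho_i$ would breach feasibility and drop the middleman's payoff to $0$, contradicting a weak gain from the positive baseline. So $\rho'\leqslant\hat\rho$ coordinatewise, and Axiom~\ref{axiom-on-pi} then gives $\pi(\rho',s_1',s_2')\leqslant\pi(\hat\rho,1,1)$, showing the middleman cannot strictly gain. A Pareto improvement would therefore require some user to gain strictly while $\pi$ stays exactly constant under a strict fee reduction. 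The main obstacle is here: Axiom~\ref{axiom-on-pi} only posits \emph{weak} monotonicity of $\pi$, so strictly speaking a flat patch of $\pi$ could in principle admit a surplus transfer from the middleman to a user. The cleanest conclusion actually following from the stated axioms is therefore weak Pareto efficiency (no alternative strictly improves all three players, immediate from the middleman's inability to strictly gain); a strong Pareto efficiency claim seems to need strict monotonicity of $\pi$ in at least one fee coordinate at $(\hat\rho,1,1)$, which I would either impose locally or fold into a restated version of Axiom~\ref{axiom-on-pi}.
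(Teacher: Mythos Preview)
Your arguments for (i) and (ii) are correct and essentially identical to the paper's: weak dominance of $s_i=1$ follows from strict monotonicity of $f_i$ together with the piecewise structure of $\pi_i$, and the Nash claim then reduces to checking the middleman's best response at $(1,1)$, which follows from weak monotonicity of $\pi$ and the participation constraints. (One minor overstatement: you call $\pi(\hat\rho,1,1)$ a ``positive baseline'', but Axiom~\ref{axiom-on-pi} only guarantees $\pi\geqslant 0$; this does not affect the best-response argument, since equality to zero after raising a fee is still not a profitable deviation.)

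For (iii), you again parallel the paper's route---show that $\pi_M$ attains its global maximum at the candidate profile, then argue about the users---but you are \emph{more} careful than the paper. The paper's proof simply observes that $\pi_M$ is globally maximal and that each $\pi_i$ is maximal over $(s_1,s_2)$ for the fixed $\hat\rho$, and declares (iii) proved. It does not address the case you isolate: a deviation that lowers some $\rho_i$ on a flat patch of $\pi$, leaving the middleman indifferent while strictly benefiting a user. Your diagnosis is accurate: under Axiom~\ref{axiom-on-pi} as stated (weak monotonicity only), one obtains weak Pareto efficiency immediately from the middleman's global optimality, whereas strong Pareto efficiency would require some local strictness of $\pi$ in the fee arguments. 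The paper's own proof does not close this gap; your suggested fix (strengthening Axiom~\ref{axiom-on-pi} to strict monotonicity in $\rho$, at least locally) is the natural one.
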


\begin{proof}
From Axiom \ref{axiom-on-f} it follows immediately that $\max \, f_i (\cdot , s_{-i} ) - \rho_i$ is maximised by $s_i =1$ for both $i=1,2$ regardless of $s_{-i} \in [0,1]$ and $\rho_i \geqslant 0$. Furthermore, if $\rho_i > f_i (1, s_{-i} )$, then $\pi_i =0$ irrespective of $s_{-i} \in [0,1]$. Hence, indeed $\pi_i$ is maximal at $s_i=1$ irrespective of $s_{-i} \in [0,1]$ and $\rho_i \geqslant 0$. This shows assertion (i).
\\
From (i) it follows that $\hat{s}_1= \hat{s}_2 =1$ are best responses to any $\rho = (\rho_1 , \rho_2 ) \geqslant 0$. Given the formulation of $\pi_M$ under Axiom \ref{axiom-on-pi}, it is obvious that $\rho_i = \hat{\rho}_i = f_i (1,1)$, $i\in\{1,2\}$ is a best response to $\hat{s}_1= \hat{s}_2 =1$, showing assertion (ii).
\\
Obviously, by Axiom \ref{axiom-on-pi}, $\pi_M$ is globally maximal over $(s_1,s_2,\rho)$ at $\rho_i = \hat{\rho}_i = f_i (1,1)$, $i\in\{1,2\}$, if $s_1=s_2=1$. The strict monotonicity of $f_i$ in both $s_1$ and $s_2$ implies in turn that $\pi_i$ is maximal over $(s_1,s_2)$ for the given $\rho_i = \hat{\rho}_i = f_i (1,1)$, $i\in\{1,2\}$. This shows assertion (iii).
\end{proof}

\medskip\noindent
Theorem \ref{thm:max usage} shows that the maximal usage of the interaction platform described by $\hat{s}_1= \hat{s}_2  =1$ and its maximal exploitation by the middleman represented by access fees $\hat{\rho}_i = f_i (1,1)$, $i\in\{1,2\}$, is individually stable as a Nash equilibrium as well as Pareto efficient under the relatively weak hypotheses stated in Axioms \ref{axiom-on-f} and \ref{axiom-on-pi}. This state of maximal usage and full exploitation represents the benchmark case, in which the middleman exercises maximal monopolistic control of the intermediated interaction taking place in the provided platform. As such, it points to the expected outcome of intermediated interaction described in our model in the absence of meaningful contestation.

\subsection{Introducing contestation as ambiguity}

\noindent
Next we introduce ambiguity into our simple hedonic model of intermediated interaction to express the state of contestation of the middleman's position in this interaction. Hence, we endow the middleman with optimistic and pessimistic beliefs and degrees of optimism and pessimism, according to the neo-additive formulation of \citet{EichbergerKelseySchipper2007}.

Thus, the only player considered to be affected by positional ambiguity is the middleman, who is ambiguous about her position in the intermediated interaction situation. Her optimistic beliefs refer to a fully successful middleman in complete control, who is able to extract maximal gains from her position. This state obviously corresponds to the Nash equilibrium identified in Theorem \ref{thm:max usage}(ii).

Her pessimistic beliefs refer to a contested middleman position in which the users have access to at least one alternative interaction platform. We model this through the emergence of reduced expected interaction levels in the middleman's platform and a lesser ability to extract full participation fees from the provision of that platform.

We introduce these two sets of beliefs as functions $M$ and $m$ in accordance to the ambiguity model of \citet{EichbergerKelseySchipper2007} summarised in Section 2.2.
\begin{description}
\item[Optimistic beliefs:] The middleman believes that the two users fully participate in his interaction platform at the maximal level $\hat{s}_1=\hat{s}_2=1$. Hence, the optimistic outcome for the middleman $M$ equals
\[
M (\rho ) = \left\{ \begin{array}{ll}
\pi (\rho,1,1) & \mbox{if } \rho_1 \leqslant f_1(1,1) \mbox{ and } \rho_2 \leqslant f_2(1,1), \\
0 & \mbox{otherwise}
\end{array}
\right.
\]
\item[Pessimistic beliefs:] Under full contestation, the middleman assumes that the two users only partially participate in the provided platform. This remaining activity can be interpreted as a \emph{loyalty interaction level}, represented by $s_1 = s^*_1$ and $s_2 = s^*_2$. The situation that there is no loyalty at all can be described by $s^*_1 = s^*_2 =0$.
\\
These levels express that each user might reduce his participation in the platform and use alternatively provided platforms as well. We assume that the middleman has an assessment of the contested situation in that she assumes that usage might be reduced, but not necessarily all the way to zero. Hence, the parameters $s^*_1$ and $s^*_2$ represented such loyalty interaction levels are part of the description of the (pessimistic) beliefs of the middleman.
\\
Now for the given usage levels, the generated pessimistic payoff function for the middleman $M$ can be formulated as
\[
m( \rho ) = 
\left\{ \begin{array}{ll}
\pi (\rho, s^*_1, s^*_2) & \mbox{if } \rho_1 \leqslant f_1(s^*_1,s^*_2), \mbox{ and } \rho_2 \leqslant f_2(s^*_1,s^*_2) \\
0 & \mbox{otherwise}.
\end{array}
\right.
\]
\end{description}
Given the introduced beliefs and a degree of optimism $\lambda$ and a degree of pessimism $\gamma$, we now arrive at the following modified payoff function of the middleman $M$:
\[
\overline{\pi}_M (\rho,s_1,s_2) = \lambda M(\rho) + \gamma m(\rho) + (1- \lambda - \gamma ) \pi_M (\rho,s_1,s_2).
\]
Our main result shows that if the middleman is ambiguous, she might maintain maximal access fees even in the case that she might expect to be contested and that there is little loyalty among her customers.
\begin{theorem} \label{thm:contested case}
Assume that Axioms \ref{axiom-on-f} and \ref{axiom-on-pi} hold, that $\gamma < 1$, and that $s^*_1 <1$ as well as $s^*_2 <1$. Then, in the resulting ambiguity equilibrium, the middleman charges the full exploitation fees of $(\rho_1,\rho_2) = (f_1(1,1),f_2(1,1))$ if and only if
\begin{equation}
\label{eq:contested case}
\Delta \geqslant \frac{\gamma}{1- \gamma} \, \pi (\varphi , s^*_1,s^*_2) 
\end{equation}
where $\varphi = (f_1(s^*_1,s^*_2),f_2(s^*_1,s^*_2))$ represents the full extraction at the pessimistic, low usage level of the interaction platform and
\[
\Delta = \pi (f_1(1,1),f_2(1,1),1,1) - \pi (\varphi ,1,1)
\]
is the middleman's maximal income differential between the maximal usage and low usage levels.
\end{theorem}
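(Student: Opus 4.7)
The plan is to compute the ambiguity equilibrium directly by finding a Nash equilibrium of the modified game $(S,\overline{\pi})$ in which only the middleman's payoff has been altered while the users' payoffs are as in the benchmark model. Since the users' payoff functions are unchanged, Theorem~\ref{thm:max usage}(i) still applies, so I may fix $\hat{s}_1=\hat{s}_2=1$ and reduce the problem to maximising
\[
\overline{\pi}_M(\rho,1,1)=\lambda M(\rho)+\gamma m(\rho)+(1-\lambda-\gamma)\,\pi_M(\rho,1,1)
\]
over $\rho=(\rho_1,\rho_2)\in[0,1]^2$. The heart of the argument is that this function is piecewise defined by two thresholds on each coordinate, namely $\varphi_i=f_i(s^*_1,s^*_2)$ and $f_i(1,1)$, with $\varphi_i<f_i(1,1)$ by Axiom~\ref{axiom-on-f} and the assumption $s^*_i<1$.

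First I would partition $[0,1]^2$ into regions according to the status of $\rho_i$ relative to $\varphi_i$ and $f_i(1,1)$, and evaluate each of the three indicator-driven terms $M$, $m$, $\pi_M(\cdot,1,1)$ on each region. Outside the ``full extraction feasibility'' box $\{\rho_i\leqslant f_i(1,1)\}$ the functions $M$ and $\pi_M(\cdot,1,1)$ drop to zero, so I may restrict attention to that box. Within it, $m(\rho)=\pi(\rho,s^*_1,s^*_2)$ only on the smaller box $\{\rho_i\leqslant\varphi_i\}$ and vanishes elsewhere. Using the weak monotonicity of $\pi$ (Axiom~\ref{axiom-on-pi}), on the region $\{\varphi_i<\rho_i\leqslant f_i(1,1)\}$ (for at least one $i$) the modified payoff reduces to $(1-\gamma)\pi(\rho,1,1)$, which is maximised at $\hat{\rho}=(f_1(1,1),f_2(1,1))$, while on $\{\rho_i\leqslant\varphi_i\}$ it equals $(1-\gamma)\pi(\rho,1,1)+\gamma\pi(\rho,s^*_1,s^*_2)$, which, again by monotonicity, is maximised at $\varphi$. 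Hence the middleman's best response to $(1,1)$ is one of the two candidates $\hat{\rho}$ or $\varphi$.

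Second, I would evaluate the two candidate payoffs explicitly. At $\hat{\rho}$ one has $m(\hat{\rho})=0$ (since $\hat{\rho}_i=f_i(1,1)>\varphi_i$, using $s^*_i<1$), giving $\overline{\pi}_M(\hat{\rho},1,1)=(1-\gamma)\,\pi(f_1(1,1),f_2(1,1),1,1)$. At $\varphi$ all three terms are active and the modified payoff collapses to $(1-\gamma)\pi(\varphi,1,1)+\gamma\pi(\varphi,s^*_1,s^*_2)$. The condition $\overline{\pi}_M(\hat{\rho},1,1)\geqslant\overline{\pi}_M(\varphi,1,1)$, after dividing by $1-\gamma>0$ (using $\gamma<1$) and rearranging with the definition of $\Delta$, is exactly \eqref{eq:contested case}. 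Finally I would close the equilibrium check: given that the middleman plays $\hat{\rho}$, the users still have $s_i=1$ as a best response by Theorem~\ref{thm:max usage}(i), so the strategy profile is indeed a Nash equilibrium of $(S,\overline{\pi})$.

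The only delicate step is the case analysis in the first paragraph: verifying that no ``mixed'' fee vector (for instance $\rho_1\leqslant\varphi_1$ with $\varphi_2<\rho_2\leqslant f_2(1,1)$) can beat both $\hat{\rho}$ and $\varphi$. This is handled by noting that in such a mixed region $m(\rho)=0$, so the payoff collapses again to $(1-\gamma)\pi(\rho,1,1)$ and, by Axiom~\ref{axiom-on-pi}, is dominated by $\hat{\rho}$. Once this region-by-region check is settled, the equivalence in the theorem follows immediately from the two-candidate comparison.
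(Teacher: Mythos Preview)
Your proposal is correct and follows essentially the same route as the paper: fix $s_1=s_2=1$ by Theorem~\ref{thm:max usage}(i), compute $\overline{\pi}_M$ at the two fee vectors $F=(f_1(1,1),f_2(1,1))$ and $\varphi$, and compare. The only difference is that you add a region-by-region case analysis (using Axiom~\ref{axiom-on-pi}) to justify that the global maximum of $\overline{\pi}_M(\cdot,1,1)$ must be attained at one of these two candidates, whereas the paper's proof simply asserts the two-point comparison without this step; your version is therefore slightly more complete.
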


\begin{proof}
Define for ease of notation $F = (f_1(1,1),f_2(1,1)) \in S_M$ as the maximal fee vector and $\varphi = (f_1(s^*_1,s^*_2),f_2(s^*_1,s^*_2)) \in S_M$ as the maximal fee vector in case of an interaction level at the assumed loyalty levels $s^*_1$ and $s^*_2$.
\\
First note that, from Axiom \ref{axiom-on-f}, $f_1$ and $f_2$ are strictly increasing. Hence, by these hypotheses $\varphi \ll F$.
\\
Since both users are not subjected to ambiguity, Theorem \ref{thm:max usage}(i) holds and, therefore, in the resulting ambiguity equilibrium both users will still access the provided platform maximally at $s_1=s_2=1$. We can now compute that the ambiguity payoff for the middleman selecting the maximal fee $\rho = F$ would result into

\begin{align*}
\overline{\pi}_M (F,1,1) & = \lambda \cdot \pi (F,1,1) + \gamma \cdot 0 + (1- \lambda - \gamma ) \pi (F,1,1) = \\
& = (1- \gamma ) \pi (F,1,1) .
\end{align*}

Similarly, for the reduced fee $\rho = \varphi$ we compute

\begin{align*}
\overline{\pi}_M (\varphi ,1,1) & = \lambda \cdot \pi (\varphi ,1,1) + \gamma \cdot \pi (\varphi ,s^*_1,s^*_2) + (1- \lambda - \gamma ) \pi (\varphi ,1,1) = \\
& = \gamma \cdot \pi (\varphi ,s^*_1,s^*_2) + (1- \gamma ) \pi (\varphi ,1,1) .
\end{align*}

Hence, $\overline{\pi}_M (F,1,1) \geqslant \overline{\pi}_M (\varphi ,1,1)$ if and only if
\[
(1- \gamma ) \pi (F,1,1) \geqslant \gamma \cdot \pi (\varphi ,s^*_1,s^*_2) + (1- \gamma ) \pi (\varphi ,1,1)
\]
if and only if
\[
(1- \gamma ) \left( \, \pi (F,1,1) - \pi (\varphi ,1,1) \, \right) \geqslant \gamma \cdot \pi (\varphi ,s^*_1,s^*_2) .
\]
This is equivalent to
\[
\Delta = \pi (F,1,1) - \pi (\varphi ,1,1) \geqslant \frac{\gamma}{1-\gamma} \, \pi (\varphi ,s^*_1,s^*_2) .
\]
This completes the proof of the assertion.
\end{proof}

\section{Application: An activity level formulation}

\noindent
In this section we develop a simplification of the generic model. This application is founded on the premise that the middleman uses an activity level measure to gage the success of her business. Furthermore, we assume that in principle the middleman considers the success of her platform as essential in her performance and, therefore, in her game theoretic payoff function $\pi$.

Let $g\colon [0,1]^{2} \to \mathbb{R}_{+}$ capture the total \emph{activity level} that the middleman perceives as being created by the users' participation in the provided interaction platform. Hence, $g (s_1,s_2) \geqslant 0$ measures the activity level perceived by the middleman $M$ if users $1$ and $2$ exert participation levels $s_1$ and $s_2$, respectively. The next hypothesis is equivalent to Axiom \ref{axiom-on-pi}.
\begin{axiom}\label{axiom-on-g}
The activity measurement function $g$ is weakly increasing in each argument. 
\end{axiom}
The simplest example of a weakly monotonic activity measurement function is the constant function given by $g(s_1,s_2)=1$ for all $s_1,s_2 \geqslant 0$. This particular case refers to the absence of the measurement by the middleman.

We assume now that the middleman $M$ measures her success by the income generated as well as the perceived level of activity $g (s_1,s_2)$ in the provided interaction platform. We use a simple multiplicative formulation.
\[
\pi_M (\rho,s_1,s_2) = 
\left\{ \begin{array}{ll}
(\rho_1 + \rho_2) \cdot g(s_1 , s_2) & \mbox{if } \rho_1 \leqslant f_1 (s_1,s_2) \mbox{ and } \rho_2 \leqslant f_2 (s_1,s_2) , \\
0 & \mbox{otherwise}.
\end{array}
\right.
\]
In this formulation it is again taken into account that the platform will only be active if both users are not overcharged, i.e., if $\rho_i \leqslant f_i (s_1,s_2)$ for both $i =1,2$. If at least one of the users is overcharged, the resulting payoffs are assumed to be nil to the middleman $M$.

Furthermore, the middleman's payoff should be interpreted as a utility function; the middleman cares about the usage of the provided platform in proportion to the money the middleman receives, which is $\rho$. We can interpret this feature of the model as capturing network effects and the future profitability of the provided platform.\footnote{In social networks, the usage of the network is a commonly cited statistic to gauge the health of a network.}


We emphasise that under contestation there still results a ``residual'' activity level $g (s^*_1,s^*_2) \geqslant 0$ based on the loyalty of the middleman's customers. This residual activity level represents the minimal activity level that occurs in the provided interaction platform.

Since Axiom \ref{axiom-on-g} and the formulation of $\pi_M$ above implies Axiom \ref{axiom-on-pi}, we can easily derive the following corollary to Theorem \ref{thm:contested case}.
\begin{corollary} \label{coro:act-level}
Assume that Axioms \ref{axiom-on-f} and \ref{axiom-on-g} hold, that $\gamma < 1$, and that $s^*_1 <1$ as well as $s^*_2 <1$. Then in the ambiguity equilibrium for the activity level formulation, the middleman $M$ charges the full exploitation fee $\rho = (f_1(1,1),f_2(1,1))$ if and only if
\begin{equation}
\label{eq:act level}
\left( \, \frac{f_1(1,1) + f_2(1,1)}{f_1(s^*_1,s^*_2) + f_2(s^*_1,s^*_2)} -1 \, \right) \frac{g(1,1)}{g (s^*_1, s^*_2)} \geqslant \frac{\gamma}{1- \gamma} .
\end{equation}
\end{corollary}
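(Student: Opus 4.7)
The plan is to realise that this is essentially a substitution exercise: the activity-level specification is a concrete instance of the generic model of Section 3, and Theorem \ref{thm:contested case} already delivers the necessary and sufficient condition for full extraction in the abstract setting. The proof therefore boils down to verifying that Theorem \ref{thm:contested case} applies and then expanding the resulting inequality.

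First I would check that the hypotheses of Theorem \ref{thm:contested case} are in force. The hypothesis on $f$ is Axiom \ref{axiom-on-f}, assumed directly. For Axiom \ref{axiom-on-pi}, I would note that with $\pi_M(\rho, s_1, s_2) = (\rho_1 + \rho_2) \cdot g(s_1, s_2)$ on the relevant region, weak monotonicity in $\rho_1, \rho_2$ is immediate since $g \geqslant 0$, and weak monotonicity in $s_1, s_2$ follows from Axiom \ref{axiom-on-g}. The remaining hypotheses $\gamma < 1$ and $s^*_1, s^*_2 < 1$ are carried over verbatim from the statement of the corollary.

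Second, I would substitute the activity-level form into the expressions for $\Delta$ and $\pi(\varphi, s^*_1, s^*_2)$ appearing in inequality \eqref{eq:contested case}. With $F = (f_1(1,1), f_2(1,1))$ and $\varphi = (f_1(s^*_1, s^*_2), f_2(s^*_1, s^*_2))$, one computes
\[
\Delta = \bigl[(f_1(1,1) + f_2(1,1)) - (f_1(s^*_1,s^*_2) + f_2(s^*_1,s^*_2))\bigr] \cdot g(1,1)
\]
and
\[
\pi(\varphi, s^*_1, s^*_2) = (f_1(s^*_1,s^*_2) + f_2(s^*_1,s^*_2)) \cdot g(s^*_1, s^*_2).
\]
Plugging these into \eqref{eq:contested case} and dividing both sides by the positive quantity $(f_1(s^*_1,s^*_2) + f_2(s^*_1,s^*_2)) \cdot g(s^*_1, s^*_2)$ yields exactly the inequality \eqref{eq:act level}.

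The only mild obstacle is ensuring that the divisor in the last step is strictly positive, so that the if-and-only-if character of Theorem \ref{thm:contested case} is preserved upon rescaling; this is immediate as long as $g(s^*_1, s^*_2) > 0$ and at least one of $f_1(s^*_1, s^*_2)$, $f_2(s^*_1, s^*_2)$ is strictly positive, which are the natural regularity conditions implicit in the activity-level set-up. No further machinery is required beyond Theorem \ref{thm:contested case} and the explicit multiplicative form of $\pi_M$.
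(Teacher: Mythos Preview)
Your proposal is correct and follows essentially the same route as the paper's own proof: verify that Axiom \ref{axiom-on-g} yields Axiom \ref{axiom-on-pi} so that Theorem \ref{thm:contested case} applies, then substitute the multiplicative form $\pi(\rho,s_1,s_2)=(\rho_1+\rho_2)\,g(s_1,s_2)$ into \eqref{eq:contested case} and simplify. If anything, you are slightly more careful than the paper in flagging that the divisor $(f_1(s^*_1,s^*_2)+f_2(s^*_1,s^*_2))\cdot g(s^*_1,s^*_2)$ must be strictly positive for the equivalence to survive the division step.
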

\begin{proof}
Note that Axiom \ref{axiom-on-g} implies that $\pi (\rho ,s_1,s_2) = (\rho_1 + \rho_2) \cdot g(s_1,s_2)$ satisfies Axiom \ref{axiom-on-pi}. Therefore, for the activity level formulation Theorem \ref{thm:contested case} applies. Let $\hat{f} \equiv f_1(1,1) + f_2(1,1)$ and $\hat{\varphi} \equiv f_1(s^*_1,s^*_2) + f_2(s^*_1,s^*_2)$. Note that (\ref{eq:contested case}) translates now to
\[
\hat{f} \cdot g(1,1) -\hat{\varphi} \cdot g(1,1) \geqslant \frac{\gamma}{1-\gamma} \, \hat{\varphi} \cdot g(s^*_1,s^*_2) .
\]
This implies the inequality (\ref{eq:act level}) as asserted in the proposition.
\end{proof}

\medskip\noindent
Within the derived formulation in this corollary, the terms on the left measure the effects of being in a contested state using two ratios in terms of the users' perceived usage of the platform and the middleman's measurement of the same. The ratio on the right refers to the degree of pessimism of the middleman. This has two qualitative consequences.

First, the higher the perceived drop in usage of the platform due to competition, the more likely that the inequality holds and, therefore, the middleman will resort to full exploitation of her position.

Second, the higher the degree of pessimism of the middleman, the less likely the inequality holds and, therefore, the less likely the middleman will resort to full exploitation of her position. This indicates that more pessimistic middlemen are more competitive rather than exploitive.

This dual conclusion is further illustrated for a simplification of the general activity level formulation to a benchmark case in which the maximal usage of the platform is normalised to unity and that middleman as well as users measure the usage level under competition by the same quantifier $\sigma \leqslant 1$. The result is a simplified statement with only two relevant variables, the degree of pessimism $\gamma$ and the perceived usage level under competition $\sigma$.
\begin{corollary} \label{coro:contested case}
Assume that Axioms \ref{axiom-on-f} and \ref{axiom-on-g} hold. Furthermore, assume that $f_1(1,1) = f_2(1,1) = g(1,1) = 1$ as well as that there is a residual activity level given by $g(s^*_1,s^*_2) = f_1(s^*_1,s^*_2) = f_2(s^*_1,s^*_2) \equiv \sigma$. Then in the resulting ambiguity equilibrium the middleman charges the full exploitation access fee $\rho = (1,1)$ if and only if
\begin{equation}
(1- \gamma) (1 - \sigma ) \geqslant \gamma \sigma^2 .
\end{equation}
\end{corollary}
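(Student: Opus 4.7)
The plan is to treat this as a direct corollary of Corollary \ref{coro:act-level}: the hypotheses of that corollary hold (Axioms \ref{axiom-on-f} and \ref{axiom-on-g}, and $\gamma<1$ can be taken for granted since otherwise the inequality $(1-\gamma)(1-\sigma)\geqslant\gamma\sigma^{2}$ is trivially equivalent to $\sigma=0$), so the middleman charges the full exploitation fee if and only if inequality (\ref{eq:act level}) is satisfied. Hence the proof reduces to substituting the normalisations $f_1(1,1)=f_2(1,1)=g(1,1)=1$ and $f_1(s^*_1,s^*_2)=f_2(s^*_1,s^*_2)=g(s^*_1,s^*_2)=\sigma$ into (\ref{eq:act level}) and simplifying.

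After substitution the left-hand side of (\ref{eq:act level}) becomes
\[
\left(\frac{1+1}{\sigma+\sigma}-1\right)\cdot\frac{1}{\sigma}=\left(\frac{1}{\sigma}-1\right)\cdot\frac{1}{\sigma}=\frac{1-\sigma}{\sigma^{2}},
\]
so the equivalent condition is $\frac{1-\sigma}{\sigma^{2}}\geqslant\frac{\gamma}{1-\gamma}$. Since $\sigma^{2}>0$ and $1-\gamma>0$, cross-multiplying preserves the inequality and yields exactly $(1-\gamma)(1-\sigma)\geqslant\gamma\sigma^{2}$, which is the asserted condition.

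There is no real obstacle here; the only point that deserves a sentence of care is the boundary case $\sigma=0$. In that situation the left-hand side of (\ref{eq:act level}) is nominally $(\infty-1)\cdot(1/0)$, but the underlying inequality (\ref{eq:contested case}) from Theorem \ref{thm:contested case} is perfectly well-defined and reduces to $\pi(F,1,1)\geqslant 0$, which is satisfied; the claimed inequality $(1-\gamma)(1-\sigma)\geqslant\gamma\sigma^{2}$ also reduces to $1-\gamma\geqslant 0$, so the equivalence continues to hold at the boundary. Everything else is a one-line algebraic simplification, and the proof can be presented in just a few lines.
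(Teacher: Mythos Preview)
Your proposal is correct and follows exactly the route the paper intends: Corollary \ref{coro:contested case} is stated without a separate proof because it is an immediate specialisation of Corollary \ref{coro:act-level}, obtained by substituting the normalisations into (\ref{eq:act level}) and clearing denominators. Your extra care with the boundary cases $\gamma=1$ and $\sigma=0$ is a welcome addition beyond what the paper spells out.
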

This corollary is illustrated in the graph in Figure \ref{fig:2D}. The shaded area is exactly the set of those values of $\sigma$ and $\gamma$ in which the middleman will charge the full exploitation fee of $\rho=(1,1)$.

\begin{figure}
\centering
\includegraphics[scale=1.2]{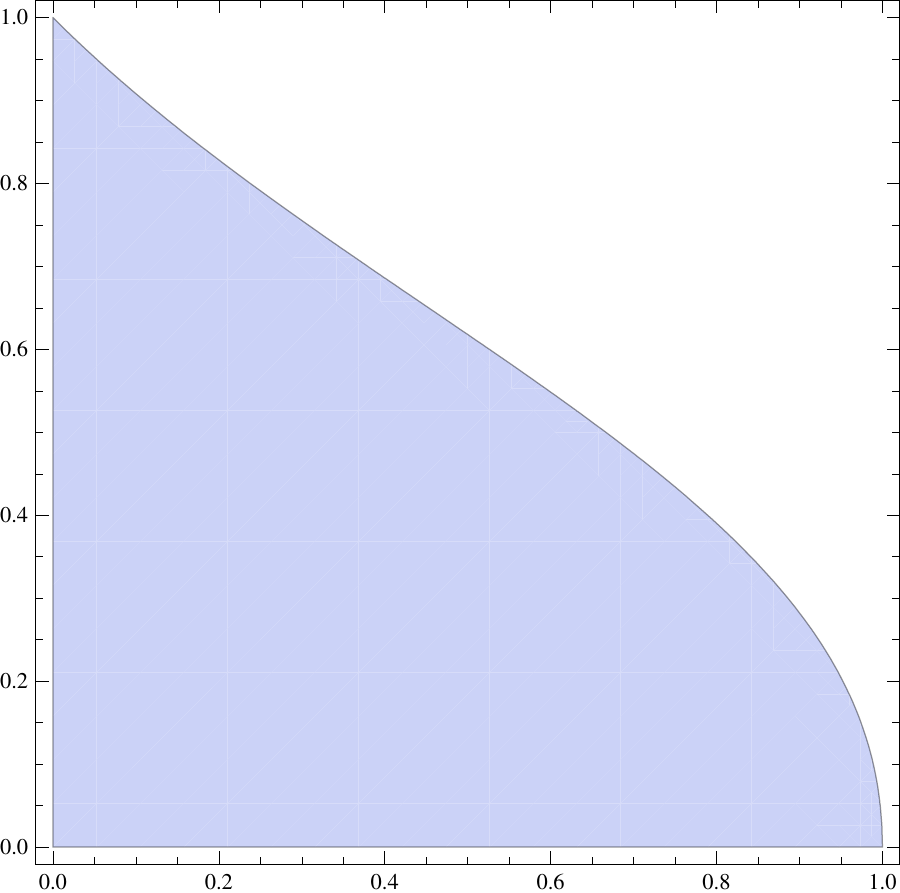}
\caption{Depiction of $\gamma$ (X-axis) and $\sigma$ (Y-axis) satisfying Corollary \ref{coro:contested case}.}
\label{fig:2D}
\end{figure}

This graphical illustration of Corollary \ref{coro:contested case} shows the two qualitative statements made above. Higher usage losses are represented by lower levels of residual activity $\sigma$ and result in full exploitation. Similarly, lower degrees of pessimism $\gamma$ result in full exploitation as well.

These two qualitative effects seem at first counterintuitive. The shaded area predominates, and in this area the middleman acts as though oblivious to contestation. Consider a low value of residual activity $\sigma$---on the vertical axis. The first reaction, in the standard microeconomic way of thinking, is to say that a low level of residual activity $\sigma$ will motivate the middleman to match it; that would be the middleman's way of competing with the perceived threat of undercutting that the low $\sigma$ represents.

We offer the following explanation for this seemingly paradoxical result. The middleman who faces a low residual activity level $\sigma$ chooses to ignore the contested situation this $\sigma$ represents unless her degree of pessimism $\gamma$ is sufficiently large. This behaviour is in fact understandable if we think in terms of the middleman grasping at straws. ``I feel my middleman position is contested, which means my price-setting power may be weak; well then, I will try to grab as much positional rent as I can now while I have some monopolistic power.'' This explanation is in keeping with the bounded rationality nature of the middleman's process of decision-making under ambiguity. While fleshing out this idea fully requires a dynamic model and indeed one with at least two distinct middlemen, which we plan to present in a subsequent paper, we feel that the idea captures powerfully behaviours we see around us.

For example, consider the competitive practices of \emph{Microsoft}. It continues to rely for the vast majority of its revenue on its Windows and Office divisions---this has been notable for some time now---see, for example, \citet{Economist2006}. Meanwhile the patterns of use of computing devices are moving fast away from the Windows platform running on desktop and laptop computers, towards mobile computing devices on which Microsoft has so far failed to establish a strong presence, thereby being unable to act as a middleman (platform provider) to the users and programmers interested in mobile computing. Microsoft has been acting to protect its privileged role as a platform provider for some time now, notably resulting in the Antitrust case brought against it in the US. (See \citet{Cass2013} for a discussion in the general context of antitrust in high-tech.) The recent troubles of \emph{Research In Motion} (RIM), the maker of the Blackberry and the associated mobile communication platform, appears to fit our analysis and can be seen as another example in the field of information technology.

On the other hand, the retail industry can be represented on the other end of the two dimensional spectrum considered here. Indeed, retail firms such as supermarket chains can reasonably be expected to be in the unshaded part of the graph, with relatively high $\sigma$ and relatively high $\gamma$. The high degree of pessimism $\gamma$ comes from the normality of strong price competition in retail, implying that contested positions are expected as ``normal''. Furthermore, the high expected usage level $\sigma$ comes from the high degree of trust that well-established chains command with their customers, who continue to use them in fairly high degree even when alternatives are present.

\section{Some concluding remarks}

\paragraph{Considering user externalities}

We can explicitly introduce externalities among the users of a platform through a simple modification of our model. This refers to the simple observation that only if both users participate, there actually will be any generation of mutual benefits to these users and, by extension, to the platform provider. In our approach thus far this is excluded through imposition of Axioms \ref{axiom-on-f} and \ref{axiom-on-pi}.

A simple alternative formulation to illustrate this point is that of a straightforward Cobb-Douglas activity level function in the setting of the activity level model introduced in Section 4. This allows for a fully symmetric formulation of the fact that benefits are only generated if both users actually participate, i.e., if $s_1, \, s_2 >0$.

More formally, within the activity level formulation, we assume the canonical formulation with
\[
f_1 (s_1,s_2) = f_2 (s_1,s_2) = f(s_1,s_2) = g(s_1,s_2) =s_1 \cdot s_2 .
\]
We remark that this case does not satisfy Axiom \ref{axiom-on-f}, since as is the case with the Cobb-Douglas function, we have here a minimal level set at the value zero in which strict monotonicity fails. This describes the inter-personal externalities required for mutually beneficial interaction between the two users.

The state of maximal usage and exploitation is again given by the Pareto optimal Nash equilibrium with $\hat{s}_1 = \hat{s}_2 = \hat{\rho}_1 = \hat{\rho}_2 =1$, resulting in payoffs $\hat{\pi}_1 = \hat{\pi}_2 =0$ and $\hat{\pi}_M=1$.

There is also a class of trivial Nash equilibria in which $s_1 = s_2 = 0$ and, given that $f(0,0)=0$, $\rho$ taking any nonnegative value.\footnote{The reason for the existence of these trivial equilibria is that the multiplicative formulation is not strictly monotone on the boundary---as stated above.} In each of these equilibria, no player can improve her payoff by deviating from her equilibrium strategy.

We conclude that the introduction of simple externalities to the interaction between platform users might generate multiplication of equilibria and, consequently, potential coordination failure.


\paragraph{A comparison with fully rational behavioural approaches}

In this paper we have used quite deliberately an approach founded on the theory of ambiguity in decision making. Our main hypothesis is throughout that the middleman's behaviour can be understood from a purely subjective perspective, informed by the middleman's ambiguity about her position in the network.

Our approach contrasts with more traditional models of decision making under incomplete information in which decision makers are not ambiguous in their assessment of the position of the middleman in the intermediated interaction situation. Aumann's model of incomplete information \citep{Aumann1976} and Harsanyi's theory of Bayesian games \citep{Harsanyi1967,Harsanyi1968a,Harsanyi1968b} provide alternative approaches founded on a fully rational processing of informational deficiencies rather than the boundedly rational perspective on ambiguity applied here.

We use a boundedly rational perspective to reflect the principle that competition is actually a `state of mind'. This perspective assumes explicitly that this state of mind is not fully rational, but instead a reflection of a state of fundamental ambiguity about the position of the middleman in the intermediated interaction situation. This also implies that we do not model the potential competitors of that middleman explicitly, but instead limit ourselves to just modelling the state of mind of the middleman only.

Our approach yields surprising insights. We find that full extraction occurs in cases of low ambiguity, reflecting situations in which middleman contestation is not normal. This refers to situations with significant positional power of platform providers such as occur in markets for networked goods. Instead, competitive pricing occurs in situations that such contestation is normal, i.e., the degree of ambiguity is high.  The latter refers to the practices seen in the retail industry and other competitive market situations.

It is this explanatory power of our approach that justifies it.

\paragraph{Future developments}

Future research should focus on enhancing our framework to incorporate more elaborate models with multiple platform providers that operate under ambiguity about their position in the prevailing interaction networks.

\singlespace
\bibliographystyle{econometrica}
\bibliography{BibDB}
\end{document}